
\documentclass[letterpaper, 10 pt, conference]{ieeeconf}  

\IEEEoverridecommandlockouts                              

\overrideIEEEmargins                                      




\usepackage{amsmath}
\usepackage{centernot}
\usepackage{graphicx}
\usepackage{float}
\usepackage{color}
\usepackage{empheq}
\usepackage{amssymb}
\usepackage{cases} 
\usepackage[ruled,vlined]{algorithm2e}

\newtheorem{thm}{Theorem}
\newtheorem{defn}{Definition}

\newtheorem{lem}{Lemma}

\newtheorem{asm}{Assumption}

\newtheorem{remark}{Remark}

\newcommand{\obs}{\mathit{obs}}
\newcommand{\ini}{\text{in}}
\newcommand{\XXX}{\mathcal{X}}
\newcommand{\GGG}{\mathcal{G}}
\newcommand{\SSS}{\mathcal{S}}

\title{\LARGE \bf
Local Opacity Verification for Distributed Discrete Event Systems
}

\author{Sasinee Pruekprasert$^{1}$ and Kai Cai$^{2}$
\thanks{S. P. is supported by ERATO HASUO
			Metamathematics for Systems Design Project No.~JPMJER1603, JST, and Grant-in-aid No. 21K14191, JSPS.	K. C. is supported by Grant-in-aid No. 21H04875, JSPS.
			}
\thanks{$^{1}$Sasinee Pruekprasert is with the National Institute of Informatics, Hitotsubashi 2-1-2, Tokyo 101-8430, Japan.
        {\tt\small sasinee@nii.ac.jp}}%
\thanks{$^{2}$Kai Cai is with Department of Electrical and Information Engineering, Osaka
City University, Osaka 558-8585, Japan.
        {\tt\small kai.cai@eng.osaka-cu.ac.jp}}%
}

\begin{document}
{\onecolumn\large 
\noindent\textcopyright\ 2021 IEEE. Personal use of this material is permitted.  Permission from IEEE must be obtained for all other uses, in any current or future media, including reprinting/republishing this material for advertising or promotional purposes, creating new collective works, for resale or redistribution to servers or lists, or reuse of any copyrighted component of this work in other works.}
\newpage
\twocolumn

\maketitle
\thispagestyle{empty}
\pagestyle{empty}

\begin{abstract}
This paper studies current-state opacity and 
initial-state opacity verification of distributed discrete
event systems.
The distributed system's global model is the parallel
composition of multiple local systems: each of which
represents a component.
We propose sufficient conditions for verifying opacity of the global system model based only on the opacity of the local systems. We also present efficient approaches for the opacity
verification problem that only rely on the intruder's
observer automata of the local DESs.
\end{abstract}

\section{Introduction}

Security is a crucial issue in many applications, especially for 
distributed systems with multiple components that communicate across a network. As a result, methodologies to protect data privacy from malicious intruders are needed.
In this work, we consider the concept of \emph{system opacity}: a property that indicates whether or not a given ``secret'' about the system is detectable by the intruder based on the observed system's behaviors.
Opacity was proposed for analyzing security protocols in \cite{mazare2004using}. 
This concept was introduced to the discrete
event systems (DES) community in \cite{bryans2005modelling} for petri-nets, and in \cite{bryans2008opacity} for transition systems, and has been an active research topic in the DES community in recent years. Several notions of opacity have been proposed and studied in the literature \cite{wu2013comparative, saboori2013verification, falcone2015enforcement}. 
Overviews, surveys on commonly used techniques and existing tools, and historical remarks on the opacity of DES are presented in \cite{jacob2016overview, guo2020overview, lafortune2018history}.

This work studies the  opacity verification of distributed DESs, which are systems with modular structure as illustrated in Fig.~\ref{fig: architecture}.  
Opacity verification for modular systems is known to be decidable but computationally expensive: its complexity has shown to be in EXPSPACE-complete for general cases, and PSPACE-complete  if all events shared by any local DESs are observable \cite{masopust2019complexity}.
The monolithic approach to verify the opacity of modular systems is to construct data structures that estimate the intruder's global DES information based on observed event sequences.
These data structures can be large,  especially for
distributed DESs with several local components. 
To mitigate this problem, many previous studies (e.g., \cite{mohajerani2019compositional, noori2018compositional, mohajerani2019transforming})
investigated compositional opacity verification approaches. 
The approaches construct abstract structures of local DESs.
Then, the opacity verification is performed by considering synchronizations between those abstract structures in a sound and efficient manner.

In this work, we propose a different approach to verify the opacity of distributed DESs, which we call \emph{local opacity verification}.
Our technique verifies the opacity of the global system of the distributed DESs by estimating the intruder's information 
\emph{based only on local DESs}.
We propose 
sufficient conditions and their corresponding efficient verification approaches
 for the global DES's opacity using local DESs' observer automata. 
  We focus on current-state opacity (CSO) and initial-state opacity (ISO) verification. 
Our results for these two types of opacity will establish a foundation for studying other opacity notions in the future.


Reduced complexity opacity verification techniques for modular DESs by considering its local components was proposed for ISO in \cite{saboori2010reduced},
and for CSO in \cite{tong2019current} and \cite{zinck2020enforcing}.
However, our setting is different from the previous studies.
The objective of \cite{saboori2010reduced} and \cite{zinck2020enforcing} is to ensure that
 local secret states of all local DESs are protected confidentially.
 In both studies, 
 the secret of the whole system is revealed if a secret state of any local DES is revealed. 
Instead, in our work, we consider the secret of the global system defined as a subset of the global system's states, as depicted in Fig.~\ref{fig: architecture}.
Our work is also different from \cite{tong2019current}, which considers one observation map for each local DES.
Using these maps, the intruder observes the global DES through the event sequences of all local DESs. 
In our setting, the intruder has only one observation map and observes the global DES's event sequences  directly.
These differences make the algorithms developed in the previous studies unsuitable to solve our opacity verification problem.
 

The rest of this paper is organized as follows.
In Section~\ref{section DDES}, we propose a distributed architecture of a global DES with several local DESs. 
In Section~\ref{section problem},
we  
introduce the concept of local opacity verification.
We present sufficient conditions for the global DES's opacity in Section~\ref{section local},  
and propose the corresponding approaches to verify the global DES using the observer automata of local DESs in Section~\ref{section algo}.
Finally, Section~\ref{section conclusion} presents the conclusion.

\begin{figure}[t]
  \centering
    \includegraphics[width=1\linewidth]{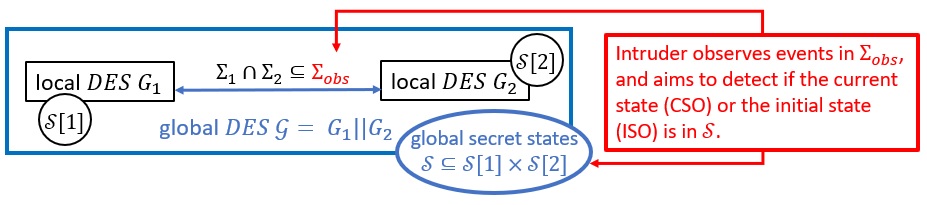} 
  \caption{An overview of our distributed architecture with two local DESs. The global DES $\GGG$ is the parallel composition of local DESs. The secret state set $\SSS$ is a subset of states of $\GGG$.
 The intruder observes  the events shared between the local DESs and some local events, then
 aims to detect if the current state (for the CSO) or the initial state (for the ISO) of $\GGG$ is in $\SSS$ (see Section \ref{section problem}). Note that this architecture can be extended to $n$ local DESs.
  }
  \label{fig: architecture} 
\end{figure}

\section{Distributed Discrete Event Systems}\label{section DDES}

We study a distributed discrete event system (DES), which we call the global DES, consisting of $n$ local DESs.  Fig.~\ref{fig: architecture} depicts an overview of the architecture with two local DESs.  

\subsection{Local and Global Systems}
For each $i \in \{1,\ldots,n\}$, we model the local DES $G_i$ as 
\begin{equation*}
G_i=(X_i,\Sigma_i,\delta_i,X_{\ini,i}),
\end{equation*}
where 
$X_i$ is the set of states, $\Sigma_i$ is the set of events,
$\delta_i: X_i \times \Sigma_i \rightharpoonup X_i$ is a partial transition function,
and $X_{\ini,i} \subseteq X_i$ is the set of initial states.
We use the notation $\delta_i(x,\sigma)!$ for ``$\delta_i(x,\sigma)$ is defined''. 
We also write $\Sigma_{G_i}$ (\emph{resp.} $\delta_{G_i}$ )  for $\Sigma_i$ (\emph{resp.} $\delta_i$) when we specifically refer to it as the event set (\emph{resp.} the transition function) of the DES $G_i$.
 
We consider a distributed system that consists of all $n$ local DESs, 
The global DES of the distributed system is constructed by the parallel composition. 
\begin{defn}[\cite{cassandras2009}]\label{defn product}
Given two DESs $G_i$ and $G_j$, their parallel composition is the DES
\[G_i \parallel G_j = (X_i \times X_j, \Sigma_{G_i \parallel G_j}, 
		\delta_{G_i \parallel G_j},
		X_{\ini,i} \times X_{\ini,j}),		
		\]
where $\Sigma_{G_i \parallel G_j} = \Sigma_{G_i} \cup \Sigma_{G_j}$ and the transition function $\delta_{G_i \parallel G_j}: X_i \times X_j \times \Sigma_{G_i \parallel G_j} \rightharpoonup  X_i \times X_j$ is defined as follows. 
\begin{subnumcases}{\delta_{G_i \parallel G_j}(x_i, x_j,\sigma) = }
(\delta_{G_i}(x_i,\sigma), \delta_{G_j}(x_j,\sigma))\label{defn product eq1}\\
 	\qquad\text{if } \delta_{G_i}(x_i,\sigma)!\text{ and } \delta_{G_j}(x_j,\sigma)!\nonumber\\
(\delta_{G_i}(x_i,\sigma), x_j)\label{defn product eq2}\\
 	\qquad\text{if } \delta_{G_i}(x_i,\sigma)! \text{ and } \sigma \notin \Sigma_{G_j}\nonumber\\
(x_i, \delta_{G_j}(x_j,\sigma))\label{defn product eq3}\\
	\qquad\text{if } \delta_{G_j}(x_j,\sigma)!\text{ and } \sigma \notin \Sigma_{G_i}\nonumber\\
\text{undefined otherwise}.\label{defn product eq4}
\end{subnumcases}
\end{defn}

Let $G_i \parallel G_j \parallel G_k=G_i \parallel (G_j \parallel G_k)$.
From Definition~\ref{defn product}, the parallel composition of two DESs is also a DES.
Moreover, the composition is associative and commutative up to a reordering of the state components in composed states \cite{cassandras2009}, i.e., $G_i \parallel (G_j \parallel G_k) = (G_i \parallel G_j) \parallel G_k$, and $G_i \parallel G_j$ can be obtained from $G_j \parallel G_i$ by reordering the state components.
In this work, we consider indexed local DESs, so
  we can treat $G_i \parallel G_j$ and $G_j \parallel G_i$ as equivalent.

The global DES is the parallel composition 
\[\GGG = (\XXX, \Sigma_\GGG, \Delta, \XXX_\ini) = G_1 \parallel \ldots \parallel G_n,\]
where $\XXX= X_1 \times \cdots \times X_n$, $\Sigma_\GGG = \Sigma_1 \cup \cdots \cup \Sigma_n$, $\XXX_\ini = X_{\ini, 1} \times \cdots \times X_{\ini, n}$, and 
$\Delta = \delta_{G_1 \parallel G_2 \parallel \ldots \parallel G_n}$.  

\subsection{Extended Transition Functions and Event Sequences}
For any DES $G=(X,\Sigma_G,\delta_G,X_{\ini})$, which can either be a local DES or a composition of local DESs,
we extend its transition function 
$\delta_G$ to $\delta_G^*: X \times \Sigma_G^* \rightharpoonup X$ in the usual way. 
Namely, $\delta_G^*(x,  \varepsilon) = x$, and for all $(\beta, \sigma) \in \Sigma_G^* \times \Sigma_G$,
\begin{align} \label{defn eq delta_star} 
\delta_G^*(x, \beta \sigma) = 
\begin{cases}
\delta_G(\delta_G^*(x, \beta), \sigma) &\text{if } \delta_G^*(x, \beta)!\text{ and}\\
	&~~~~\delta_G(\delta_G^*(x, \beta), \sigma)!\\
\text{undefined}&\text{otherwise.}
\end{cases} 
\end{align} 

An event sequence $\alpha$ is generated by $G$ if there exists $x \in X_\ini$ such that $\delta_G^*(x, \alpha)$!.
Let $|\alpha|=k$ denote the length of the event sequences $\alpha = \sigma_1 \ldots \sigma_k \in \Sigma^*$.
Let $\pi_G: \Sigma_\GGG^* \to \Sigma_G^*$ be the natural mapping
from event sequences generated by the global DES $\GGG$ to those generated by the DES $G$. 
More precisely, 
$\pi_G(\sigma) = \varepsilon$ if $\sigma = \varepsilon$ or $\sigma \in \Sigma_\GGG \setminus \Sigma_G$, $\pi_G(\sigma) = \sigma$ if $\sigma \in \Sigma_G$, and
$\pi_G(\beta \sigma) = \pi_G(\beta) \pi_G(\sigma)$
for all $(\beta, \sigma) \in \Sigma_\GGG^* \times \Sigma_\GGG$.
For notational convenience, we also use $\pi_i$ for denoting the mapping $\pi_{G_i}$, for $i \in \{1, \ldots, n\}$. 
Thereby, $\pi_i$ maps each event sequence generated by the global DES $\GGG$ its corresponding sequence generated by the local DES $G_i$. 

For a global state $x=(x_1, \ldots, x_n) \in \XXX$, let $x[i]$ denote the local state $x_i \in X_i$.
From Definition~\ref{defn product}, we have the following Lemma.
\begin{lem}\label{lem product1} 
For any $x \in \XXX_{\ini}$ and any $\alpha \in \Sigma^*$,
\begin{equation}\label{lem product1 eq01}
\Delta^*(x, \alpha)! \text{ if and only if } (\delta_i^*(x[i], \pi_i(\alpha))! ,\forall i \in \{1,\ldots n\}).
\end{equation}  
Moreover, if $\Delta^*(x, \alpha)!$,
\begin{equation}\label{lem product1 eq02}
 \Delta^*(x, \alpha) = (\delta_1^*(x[1], \pi_1(\alpha)), \ldots, \delta_n^*(x[n], \pi_n(\alpha)) )
\end{equation}
\end{lem}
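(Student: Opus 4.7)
The plan is to prove both the biconditional~(\ref{lem product1 eq01}) and the equality~(\ref{lem product1 eq02}) simultaneously, by induction on the length $|\alpha|$.

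For the base case $|\alpha|=0$, we have $\alpha=\varepsilon$ and $\pi_i(\varepsilon)=\varepsilon$ for every $i$. By definition, $\delta_i^*(x[i],\varepsilon)=x[i]$ is always defined, and $\Delta^*(x,\varepsilon)=x=(x[1],\ldots,x[n])$, so both claims hold trivially.

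For the inductive step, I would assume the claim for every sequence of length $k$ and write $\alpha=\beta\sigma$ with $|\beta|=k$. Two preliminary ingredients make the argument clean. The first is the $n$-ary version of Definition~\ref{defn product}, obtained from associativity of $\parallel$ by a short induction on $n$: for any tuple $y=(y_1,\ldots,y_n)\in\XXX$ and event $\sigma$, $\Delta(y,\sigma)$ is defined iff $\delta_i(y_i,\sigma)!$ for every $i$ with $\sigma\in\Sigma_i$, and in that case the $i$-th component of $\Delta(y,\sigma)$ equals $\delta_i(y_i,\sigma)$ if $\sigma\in\Sigma_i$ and equals $y_i$ otherwise. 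The second is the elementary identity $\pi_i(\beta\sigma)=\pi_i(\beta)\sigma$ if $\sigma\in\Sigma_i$, and $\pi_i(\beta\sigma)=\pi_i(\beta)$ otherwise.

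The forward direction then proceeds by unfolding $\Delta^*(x,\beta\sigma)$ via~(\ref{defn eq delta_star}), invoking the induction hypothesis on $\beta$, and combining with the two ingredients above to conclude, for each $i$, that $\delta_i^*(x[i],\pi_i(\beta\sigma))!$ and that the $i$-th component of $\Delta^*(x,\beta\sigma)$ equals $\delta_i^*(x[i],\pi_i(\beta\sigma))$. The reverse direction is symmetric: from $\delta_i^*(x[i],\pi_i(\beta\sigma))!$ for every $i$, a case split on whether $\sigma\in\Sigma_i$ together with~(\ref{defn eq delta_star}) yields $\delta_i^*(x[i],\pi_i(\beta))!$ for all $i$; the induction hypothesis then gives $\Delta^*(x,\beta)!$ equal to the expected tuple, and the $n$-ary composition property delivers $\Delta(\Delta^*(x,\beta),\sigma)!$ as well as the component-wise identity. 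The main obstacle is purely organisational: the case split between $\sigma\in\Sigma_i$ and $\sigma\notin\Sigma_i$ must be handled separately for every $i$, and one has to verify that the component-wise formula produced by the $n$-ary version of Definition~\ref{defn product} exactly matches the tuple $(\delta_1^*(x[1],\pi_1(\beta\sigma)),\ldots,\delta_n^*(x[n],\pi_n(\beta\sigma)))$ in every case. Once the $n$-ary composition property is explicitly stated, no genuine mathematical difficulty remains, only careful bookkeeping.
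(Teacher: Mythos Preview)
Your proof is correct. The paper does not actually provide a proof of Lemma~\ref{lem product1}; it simply states ``From Definition~\ref{defn product}, we have the following Lemma'' and leaves the result unproved as a standard fact about parallel composition. Your induction on $|\alpha|$, together with the explicit $n$-ary form of the composition transition and the projection identity $\pi_i(\beta\sigma)=\pi_i(\beta)\sigma$ or $\pi_i(\beta)$, is exactly the natural argument one would supply, and all steps go through as you describe. The only point worth a brief remark is that the $n$-ary transition rule you invoke is itself a small lemma (proved by induction on $n$ using associativity), but you already flag this, and it is entirely routine.
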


\section{Problem formulation}\label{section problem}
\subsection{Notion of Opacity}
In this paper, we consider two notions of opacity: current-state opacity (CSO) and initial-state opacity (ISO), which are two basic types of opacity properties in
the literature \cite{wu2013comparative}. The study of these two types of opacity will lay a foundation for investigation of  more complicated opacity notions.

Let $\Sigma_\obs \subseteq \Sigma_\GGG$ be the set of observable events, and $\pi_\obs: \Sigma_\GGG^* \to \Sigma_\obs^*$ be the observation mapping from each event sequence of $\GGG$ to the  sequence observed by the intruder. Notice that the composite mappings $\pi_i\circ\pi_\obs$ and $
\pi_\obs\circ\pi_i$ are equal. 
Let 
$G=(X,\Sigma_G,\delta_G,X_{\ini})$ be a local DES or a composition of local DESs. 
\begin{defn}[CSO] \label{defn CSO}
Given a set $S \subseteq X$ of secret states, the DES $G$ is \emph{current-state opaque (CSO)} w.r.t.  $S$  if,
for all $(x, \alpha) \in X_\ini \times \Sigma_G^*$ such that $\delta_G^*(x, \alpha) \in S$, 
there exists $(x', \alpha') \in X_\ini \times \Sigma_G^*$ such that $\delta_G^*(x', \alpha') \in X \setminus S$ and $\pi_{\obs}(\alpha) = \pi_{\obs}(\alpha')$.
\end{defn}

\begin{defn}[ISO] \label{defn ISO}
Given a set $S \subseteq X_\ini$ of secret initial states, 
the DES $G$ is \emph{initial-state opaque (ISO)} w.r.t. $S$ if 
for all $(x, \alpha) \in S \times \Sigma_G^*$ with $\delta_G^*(x, \alpha)!$,
there exists $(x', \alpha') \in (X \setminus S) \times \Sigma_G^*$ with $\delta_G^*(x', \alpha')!$ and $\pi_{\obs}(\alpha) = \pi_{\obs}(\alpha')$. 
\end{defn}

The intuitions of CSO and ISO are as follows. 
CSO (\emph{resp.} ISO) requires that for each event sequence $\alpha$ reaching (\emph{resp.} starting from) a secret state, there exists another sequence $\alpha'$ with the same observable events ($\pi_{\obs}(\alpha) = \pi_{\obs}(\alpha')$) reaching (\emph{resp.} starting from) a non-secret state. 

By Definitions~\ref{defn CSO} and \ref{defn ISO},
we have Lemmas~\ref{lem S1S1} and \ref{lem S1S2}, which state that we can verify the opacity w.r.t. a set $S$ by considering its  subsets $S_1, \ldots, S_m$ such that $S = S_1 \cup \cdots \cup S_m$. 
\begin{lem}\label{lem S1S1}
$G$ is CSO w.r.t. $S$ if for all $j \in \{1,\ldots,m\}$ and
 all $(x,\alpha) \in X_\ini \times \Sigma_G^*$ such that $\delta_G^*(x, \alpha) \in S_j$,   there exists  $(x', \alpha') \in X_\ini\times \Sigma_G^*$ such that $\delta_G^*(x', \alpha') \in X \setminus S$ and $\pi_{\obs}(\alpha) = \pi_{\obs}(\alpha')$.

%
\end{lem}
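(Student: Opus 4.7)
The plan is a direct unpacking of Definition~\ref{defn CSO} using the fact that $S = S_1 \cup \cdots \cup S_m$. Starting from an arbitrary witness of a secret-reaching trace, I would locate which $S_j$ the target state lies in and then invoke the hypothesis indexed by that $j$ to obtain a matching non-secret trace.

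More concretely, I would begin by fixing an arbitrary pair $(x,\alpha) \in X_\ini \times \Sigma_G^*$ with $\delta_G^*(x,\alpha) \in S$. Since $S = \bigcup_{j=1}^m S_j$, there exists some index $j \in \{1,\ldots,m\}$ such that $\delta_G^*(x,\alpha) \in S_j$. Applying the hypothesis of the lemma at this particular $j$ to the pair $(x,\alpha)$ yields $(x',\alpha') \in X_\ini \times \Sigma_G^*$ with $\delta_G^*(x',\alpha') \in X \setminus S$ and $\pi_{\obs}(\alpha) = \pi_{\obs}(\alpha')$. Since $(x,\alpha)$ was chosen arbitrarily, this is exactly the condition in Definition~\ref{defn CSO}, so $G$ is CSO with respect to $S$.

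There is essentially no technical obstacle here: the argument is a one-line set-theoretic decomposition combined with an application of the hypothesis. The only thing worth flagging is that the non-secret witness produced lies in $X \setminus S$ (not merely in $X \setminus S_j$), which is already guaranteed directly by the wording of the hypothesis; no further intersection argument is required. Thus the entire proof is a short existential instantiation.
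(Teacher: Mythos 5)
Your proof is correct and follows exactly the argument the paper intends: the paper simply remarks that the lemma follows from Definition~2 and the fact that $S = S_1 \cup \cdots \cup S_m$, and your write-up is the straightforward unpacking of that remark (locate the $S_j$ containing $\delta_G^*(x,\alpha)$, apply the hypothesis for that $j$, and note the witness already lands in $X \setminus S$). No differences to report.
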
 
\begin{lem}\label{lem S1S2}
$G$ is ISO w.r.t. $S$ for all $j \in \{1,\ldots,m\}$ and all
$(x, \alpha) \in S_j \times \Sigma_G^*$ such that $\delta_G^*(x, \alpha)!$,
 there exists $(x', \alpha') \in (X_\ini \setminus S) \times \Sigma_G^*$ such that $\delta_G^*(x', \alpha')!$ and $\pi_{\obs}(\alpha) = \pi_{\obs}(\alpha')$. 
 
\end{lem}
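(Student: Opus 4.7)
The plan is to reduce ISO of $G$ with respect to $S$ directly to the stated hypothesis by exploiting the decomposition $S = S_1 \cup \cdots \cup S_m$. Since the lemma statement as written is missing the word ``if'', I would first read it as: \emph{if} for every $j \in \{1,\ldots,m\}$ and every $(x,\alpha) \in S_j \times \Sigma_G^*$ with $\delta_G^*(x,\alpha)!$ there exists $(x',\alpha') \in (X_\ini \setminus S) \times \Sigma_G^*$ with $\delta_G^*(x',\alpha')!$ and $\pi_{\obs}(\alpha) = \pi_{\obs}(\alpha')$, \emph{then} $G$ is ISO with respect to $S$.

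To prove this, I would fix an arbitrary pair $(x,\alpha) \in S \times \Sigma_G^*$ with $\delta_G^*(x,\alpha)!$, as required by Definition~\ref{defn ISO}. Because $S = S_1 \cup \cdots \cup S_m$, there exists an index $j \in \{1,\ldots,m\}$ with $x \in S_j$, so $(x,\alpha) \in S_j \times \Sigma_G^*$. I would then instantiate the hypothesis at this $j$ to obtain a witness $(x',\alpha') \in (X_\ini \setminus S) \times \Sigma_G^*$ satisfying $\delta_G^*(x',\alpha')!$ and $\pi_{\obs}(\alpha) = \pi_{\obs}(\alpha')$. Finally, since $X_\ini \setminus S \subseteq X \setminus S$, the pair $(x',\alpha')$ already meets the requirement of Definition~\ref{defn ISO}, so $G$ is ISO with respect to $S$. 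This mirrors the analogous argument for Lemma~\ref{lem S1S1}, only substituting ``initial state'' for ``reached state''.

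I do not anticipate any real obstacle: the argument is essentially definition unfolding together with the set-theoretic identity $S = \bigcup_j S_j$. The one subtle point worth emphasising is \emph{why} the hypothesis must demand the witness to lie in $X_\ini \setminus S$ rather than in the weaker set $X_\ini \setminus S_j$. If only $X_\ini \setminus S_j$ were required, the initial state $x'$ could still belong to some other $S_k$ with $k \neq j$, hence remain a secret initial state, and the union over $j$ would not yield a non-secret explanation of the observation $\pi_{\obs}(\alpha)$. With the formulation as stated, each local witness is automatically a global witness, and the quantifier over $j$ combines cleanly with the existential quantifier in Definition~\ref{defn ISO}.
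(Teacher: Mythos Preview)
Your proposal is correct and matches the paper's own treatment: the paper does not spell out a proof but simply remarks that Lemma~\ref{lem S1S2} follows from Definition~\ref{defn ISO} and the identity $S = S_1 \cup \cdots \cup S_m$, which is precisely the definition-unfolding argument you describe. Your added comment on why the witness must lie in $X_\ini \setminus S$ rather than $X_\ini \setminus S_j$ is a nice clarification that goes slightly beyond what the paper states.
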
 

Lemma~\ref{lem S1S1} (\emph{resp.} Lemma~\ref{lem S1S2}) implies that if for each event sequence $\alpha$ reaching (\emph{resp.} starting from) any secret subset $S_j$, there exists 
another sequence $\alpha'$ with the same observable events (\emph{resp.} starting from) a non-secret state $X \setminus S$, then the DES $G$ is opaque. 
These two lemmas follows from Definitions~\ref{defn CSO} and \ref{defn ISO} and the fact that 
$S = S_1 \cup \cdots \cup S_m $.



\subsection{Opacity Verification Problem}
The goal of this work is to verify whether or not the global DES $\GGG$ is opaque (CSO, ISO) w.r.t. 
a given set $\SSS \subseteq \XXX$ of secret states. 
Note that we allow the set $\SSS$ to be any subset of the set $\XXX$ of global states without any restrictions.

\begin{defn}[Opacity verification problem]
Given local DESs $G_1, \ldots, G_n$, an observation map $\pi_\obs$, and a secret subset $\SSS \subseteq \XXX$ of global states, 
the opacity verification problem is to 
verify whether the global DES $\GGG = \parallel_{i \in \{1,\ldots,n\}} G_i$ is  opaque (CSO, ISO) w.r.t. the secret set $\SSS$.
\end{defn}

Opacity verification for modular systems is decidable but costly \cite{masopust2019complexity}. 
The monolithic approach 
 is to construct an observer automaton that estimate the intruder's information of the global DES based on observed event sequences, 
which can be large.
Therefore, in this work, we consider 
\emph{local opacity verification}: to verify the global DES 
 based only on the observer automata of local DESs.
We propose sufficient conditions for local opacity verification in Sections~\ref{section local} and corresponding efficient approaches in Section~\ref{section algo}.

\section{Local Opacity Verification} \label{section local}

\subsection{Shared Events}
As depicted in Fig.\ \ref{fig: architecture},
we assume that the intruder can observe events shared between local DESs. 
This assumption is common for DESs with modular structure \cite{saboori2010reduced, contant2006diagnosability}. 
\begin{asm}\label{asm Sig_obs_shared}
For all
$\sigma \in \Sigma_\GGG$, we have $\sigma \in \Sigma_{\obs}$ if  there exist
$i,j \in \{1,\ldots,n\}$ such that $i \neq j$ and
$\sigma \in \Sigma_i \cap \Sigma_j$.
\end{asm}

Note that we allow internal events of local DESs to be observable, i.e., we do not require 
$(\Sigma_i \setminus \bigcup_{j \neq i} \Sigma_j) \cap \Sigma_\obs = \emptyset$.

Then, we present Lemma \ref{lem CSO1}, which will be used for our results in the next sections.
The intuition of this lemma is that
an event sequence $\alpha_i'$ of a local DES $G_i$
can be projected to a sequence of the global DES $\GGG$ (not blocked by the parallel composition)
if there exists at least one sequence $\alpha$ of $\GGG$
 with $\pi_\obs \circ \pi_i(\alpha) = \pi_\obs(\alpha_i')$.

\begin{lem} \label{lem CSO1} 
Let $\SSS \subseteq \XXX$ be a set of global secret states.
We assume that Assumption~\ref{asm Sig_obs_shared} holds and there exists $(x, \alpha) \in \XXX_\ini \times \Sigma_\GGG^*$ where $\Delta^*(x, \alpha)!$,
and consider any $i \in \{1,\ldots,n\}$.
For all $(x_i', \alpha_i') \in X_{\ini, i} \times \Sigma_i^*$  satisfying
\begin{equation}\label{lem CSO1 eq1}  
 \delta_i^*(x_i', \alpha_i') ! \text{ and } \pi_\obs(\alpha_i') = \pi_\obs \circ \pi_i(\alpha),
\end{equation}
there exists $ \alpha' \in \Sigma_\GGG^*$ such that $\pi_\obs(\alpha') = \pi_\obs(\alpha)$ and
\begin{align}
\label{lem CSO1 eq2} 
\begin{split}
\Delta^*&(x[1],\ldots,x[i-1],x_i', x[i+1], \ldots, x[n], \alpha')\\
&= (s_1, \ldots, s_{i-1}, \delta_i^*(x_i', \alpha_i'), s_{i+1}, \ldots, s_n),
\end{split}
\end{align}
where $s_j = \delta_j^*(x[j], \pi_j(\alpha))$ for all $j \in \{1,\ldots,n\}$.
\end{lem}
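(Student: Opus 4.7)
The plan is to construct $\alpha'$ from $\alpha$ by replacing the unobservable $\Sigma_i$-events with those of $\alpha_i'$. The enabling observation is that by Assumption~\ref{asm Sig_obs_shared}, every $\sigma \in \Sigma_i \setminus \Sigma_\obs$ lies in no $\Sigma_j$ with $j \neq i$, because shared events are always observable; by Definition~\ref{defn product}, such events affect only the $i$-th component of a global state. This is precisely what allows a local substitution in $G_i$ to be lifted to a global event sequence without disturbing the other components.

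First, list the observable $\Sigma_i$-events of $\pi_i(\alpha)$ in order as $o_1, \ldots, o_k$, and decompose
\[
\pi_i(\alpha) = \beta_1 o_1 \beta_2 o_2 \cdots \beta_k o_k \beta_{k+1},
\]
with each $\beta_j \in (\Sigma_i \setminus \Sigma_\obs)^*$. Using the hypothesis $\pi_\obs(\alpha_i') = \pi_\obs \circ \pi_i(\alpha) = o_1 \cdots o_k$, write $\alpha_i' = \beta_1' o_1 \beta_2' o_2 \cdots \beta_k' o_k \beta_{k+1}'$ in the same pattern, with each $\beta_j' \in (\Sigma_i \setminus \Sigma_\obs)^*$. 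Since the same $o_j$'s are the observable $\Sigma_i$-events that appear in $\alpha$ itself, decompose $\alpha = \mu_1 o_1 \mu_2 o_2 \cdots \mu_k o_k \mu_{k+1}$ where each $\mu_j$ contains no observable $\Sigma_i$-event and $\pi_i(\mu_j) = \beta_j$. Define $\mu_j'$ to be the string obtained from $\mu_j$ by deleting all $\Sigma_i$-events and prepending $\beta_j'$, and set $\alpha' := \mu_1' o_1 \mu_2' o_2 \cdots \mu_k' o_k \mu_{k+1}'$.

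Verification reduces to three routine projection checks. First, $\pi_i(\alpha') = \alpha_i'$, because $\mu_j'$ contributes exactly $\beta_j'$ to the $G_i$-projection and the $o_j$'s are preserved. Second, for every $j \neq i$, $\pi_j(\alpha') = \pi_j(\alpha)$, since the only events removed or added in passing from $\mu_j$ to $\mu_j'$ are unobservable $\Sigma_i$-events, which by the key observation lie outside $\Sigma_j$. Third, $\pi_\obs(\alpha') = \pi_\obs(\alpha)$, since only unobservable events were altered. Combining these identities with Lemma~\ref{lem product1}, $\delta_i^*(x_i', \pi_i(\alpha')) = \delta_i^*(x_i', \alpha_i')$ is defined by hypothesis, and for $j \neq i$, $\delta_j^*(x[j], \pi_j(\alpha')) = \delta_j^*(x[j], \pi_j(\alpha)) = s_j$ is defined because $\Delta^*(x, \alpha)!$. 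Lemma~\ref{lem product1} then yields both the $\Delta^*$-definedness on the modified initial state and the claimed composite target state.

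I expect the main obstacle to be conceptual rather than technical: one must resist the urge to simulate $\alpha'$ step by step in the parallel composition, which would force a delicate commutation argument shuffling unobservable $\Sigma_i$-events past non-$\Sigma_i$-events. The clean route is to push all verification down to the local projections via Lemma~\ref{lem product1}, relying on Assumption~\ref{asm Sig_obs_shared} to guarantee that the substituted $\beta_j'$-events are invisible to every component other than $G_i$.
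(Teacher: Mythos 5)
Your proof is correct, but it takes a genuinely different route from the paper's. The paper argues by induction on $|\alpha|$: it writes $\alpha=\beta\sigma$, distinguishes four cases according to whether the last event $\sigma$ is in $\Sigma_{\GGG_J}\setminus\Sigma_1$, shared, locally observable, or locally unobservable, and in each case extends the inductively obtained $\beta'$ while tracking the transition function explicitly through the rules \eqref{defn product eq1}--\eqref{defn product eq4}. You instead give a one-shot construction: split $\alpha$ at the occurrences of observable $\Sigma_i$-events, splice in the unobservable blocks of $\alpha_i'$, and reduce every verification obligation (definedness, the target state in \eqref{lem CSO1 eq2}, and the identities $\pi_i(\alpha')=\alpha_i'$, $\pi_j(\alpha')=\pi_j(\alpha)$ for $j\neq i$, $\pi_\obs(\alpha')=\pi_\obs(\alpha)$) to the componentwise characterization of Lemma~\ref{lem product1}. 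Both arguments hinge on the same use of Assumption~\ref{asm Sig_obs_shared}: unobservable $\Sigma_i$-events are private to $G_i$, so the substitution is invisible to every other component. Your version is arguably more transparent and yields slightly more information, since it exhibits $\alpha'$ explicitly and shows it agrees with $\alpha$ on every projection except $\pi_i$; the paper's induction buys a self-contained, step-by-step check against the parallel-composition rules at the cost of the four-way case analysis. Two points worth making explicit if you write this up: the decomposition of $\alpha$ into $\mu_1 o_1\cdots o_k\mu_{k+1}$ relies on the fact that the occurrences of $\Sigma_i\cap\Sigma_\obs$-events in $\alpha$ are exactly those that survive as $o_1,\ldots,o_k$ in $\pi_\obs\circ\pi_i(\alpha)$, and Lemma~\ref{lem product1} applies to the modified tuple because $x_i'\in X_{\ini,i}$ keeps it inside $\XXX_\ini$.
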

\begin{proof} 
As the parallel composition operation is commutative and associative, 
we assume without loss of generality  that $i = 1$ and 
$\GGG = G_1 \parallel  \GGG_J$, where $\GGG_J = \parallel_{k \in \{2, \ldots, n\}} G_k$. 
Let $x_J = (x[2],\ldots, x[n])$.
We consider  any $(x_1', \alpha_1') \in X_{\ini, 1} \times \Sigma_1^*$ satisfying \eqref{lem CSO1 eq1},
and will prove the lemma by induction on the length of $\alpha$. 

For the base step, let $\alpha = \varepsilon$ and $\alpha' =  \alpha_1'$.
By \eqref{lem CSO1 eq1},
  we have $\pi_\obs(\alpha') = \pi_\obs \circ \pi_1(\alpha) = \varepsilon =\pi_\obs(\alpha)$.
By  Assumption \ref{asm Sig_obs_shared}, $\alpha_1' \in (\Sigma_1 \setminus \bigcup_{k \in \{1,\ldots,n\}} \Sigma_k)^*$ and $(s_2,\ldots, s_n) = x_J$. By Lemma~\ref{lem product1},
$
\Delta^*(x_1', x_J, \alpha_1') 
 = (\delta_1^*(x_1', \alpha_1'), x_J),
$
and \eqref{lem CSO1 eq2} holds.


Induction hypothesis (I.H.): if $|\alpha| < k$, there exists $\alpha' \in \Sigma_\GGG^*$ that satisfies $\pi_\obs(\alpha') = \pi_\obs(\alpha)$ and \eqref{lem CSO1 eq2}.

For the inductive step, let $\alpha = \beta \sigma$, where $\sigma \in \Sigma_\GGG$ and $|\beta| < k$. 
Notice that, since $\Delta^*(x, \alpha)!$, we have $\delta_{\GGG_J}^*(x_J,\pi_{\GGG_J}(\alpha))!$ by Lemma \ref{lem product1}, and $\Delta^*(x, \beta)!$ and $\Delta(\Delta^*(x, \beta), \sigma)!$ by \eqref{defn eq delta_star}. 
We consider the following cases.

\begin{itemize}
\item Case 1: $\sigma \in \Sigma_{\GGG_J} \setminus \Sigma_1$. In this case,  
$\pi_\obs(\alpha_1') = \pi_\obs \circ \pi_1(\alpha) = \pi_\obs \circ \pi_1(\beta)$.
By the I.H., there exists $\beta' \in \Sigma_\GGG^*$ such that $\pi_\obs(\beta' )= \pi_\obs(\beta )$ and
\begin{align}  \label{lem CSO1 eq4}
\Delta^*(x_1', x_J, \beta')
&= (\delta_1^*(x_1', \alpha_1'),   \delta_{\GGG_J}^*(x_J,\pi_{\GGG_J}(\beta))) 
\end{align} 
By setting $\alpha' = \beta' \sigma$, we have $\pi_\obs(\alpha') = \pi_\obs(\beta'\sigma)= \pi_\obs(\beta\sigma) =\pi_\obs(\alpha)$. As $\sigma \notin \Sigma_1$, by \eqref{defn product eq3},
\eqref{defn eq delta_star}, and  \eqref{lem CSO1 eq4}, 
\begin{align*}   
\Delta^*(x_1', x_J, \alpha') &= \Delta^*(x_1', x_J, \beta'\sigma)\\
&= (\delta_1^*(x_1', \alpha_1'),   \delta_{\GGG_J}^*(x_J,\pi_{\GGG_J}(\beta \sigma)))\\ 
&= (\delta_1^*(x_1', \alpha_1'),   \delta_{\GGG_J}^*(x_J,\pi_{\GGG_J}(\alpha))). 
\end{align*} 
Thereby, \eqref{lem CSO1 eq2} holds in this case.

\item Case 2: $\sigma \in \Sigma_1 \cap \Sigma_{\GGG_J}$. 
By Assumption \ref{asm Sig_obs_shared}, 
$\pi_\obs \circ \pi_1(\alpha) = \pi_\obs \circ \pi_1(\beta) \sigma$.
By \eqref{lem CSO1 eq1},
there exists $\beta_1' \in \Sigma_1^*$ such that $\beta_1' \sigma = \alpha_1'$  and  
$ 
\pi_\obs(\beta_1') \sigma = \pi_\obs(\alpha_1') = \pi_\obs \circ \pi_1(\alpha) = \pi_\obs \circ \pi_1(\beta) \sigma.
$
Thus, we have $\pi_\obs(\beta_1') = \pi_\obs \circ \pi_1(\beta)$. By I.H., there exists $\beta'$ with $\pi_\obs(\beta') = \pi_\obs(\beta) $ and 
\begin{align}\label{lem CSO1 eq2.5} 
\Delta^*(x_1', x_J, \beta') 
&= (\delta_1^*(x_1', \beta_1'),   \delta_{\GGG_J}^*(x_J,\pi_{\GGG_J}(\beta)) ).
\end{align} 
By setting $\alpha' = \beta' \sigma$, we have $\pi_\obs(\alpha') = \pi_\obs(\beta') \sigma = \pi_\obs(\beta) \sigma = \pi_\obs(\alpha) $. Then, by \eqref{defn product eq1},  \eqref{defn eq delta_star}, and \eqref{lem CSO1 eq2.5}, 
\begin{align*}  
\Delta^*(x_1', x_J, \alpha') &= \Delta^*(x_1', x_J,\beta' \sigma) \\ 
&= (\delta_1^*(x_1', \beta_1' \sigma),   \delta_{\GGG_J}^*(x_J,\pi_{\GGG_J}(\beta \sigma)) )\\
&= (\delta_1^*(x_1', \alpha_1' ),   \delta_{\GGG_J}^*(x_J,\pi_{\GGG_J}(\alpha)) ).  
\end{align*} 
Thereby, \eqref{lem CSO1 eq2} holds in this case.
\item Case 3: $\sigma \in (\Sigma_1 \setminus \Sigma_{\GGG_J}) \cap \Sigma_\obs$.
As $\sigma \in \Sigma_1 \cap \Sigma_\obs$, 
$\pi_\obs \circ \pi_1(\alpha) = \pi_\obs \circ \pi_1(\beta) \sigma$. It can be shown in the same way as in Case 2 that there exist $\beta_1' \in \Sigma_1^*$ and $\beta' \in \Sigma_\GGG^*$ satisfying \eqref{lem CSO1 eq2.5}, $\alpha_1' =  \beta_1'\sigma$, and $\pi_\obs(\beta') = \pi_\obs(\beta)$. 
By setting $\alpha'=\beta'\sigma$, we have $\pi_\obs(\alpha') = \pi_\obs(\beta')\sigma  = \pi_\obs(\beta)\sigma = \pi_\obs(\alpha) $.
Moreover, we have $\pi_{\GGG_J}(\alpha) = \pi_{\GGG_J}(\beta)$ as $\sigma \notin \Sigma_{\GGG_J}$.
By \eqref{defn product eq2}, \eqref{defn eq delta_star}, and \eqref{lem CSO1 eq2.5}, 
\begin{align*}  
\Delta^*(x_1', x_J, \alpha') &= \Delta^*(x_1', x_J,\beta' \sigma) \\ 
&= (\delta_1^*(x_1', \beta_1' \sigma),   \delta_{\GGG_J}^*(x_J,\pi_{\GGG_J}(\beta )) )\\
&= (\delta_1^*(x_1', \alpha_1' ),   \delta_{\GGG_J}^*(x_J,\pi_{\GGG_J}(\alpha)) ).  
\end{align*} 
Thereby, \eqref{lem CSO1 eq2} holds in this case.

\item Case 4: $\sigma \in \Sigma_1 \setminus (\Sigma_{\GGG_J}  \cup \Sigma_\obs)$.
Since $
\pi_\obs (\sigma) = \varepsilon$, 
$\pi_\obs \circ \pi_1(\sigma) = \varepsilon$. By \eqref{lem CSO1 eq1}, $\pi_\obs(\alpha_1') = \pi_\obs \circ \pi_1(\alpha) = \pi_\obs \circ \pi_1(\beta)$.
By the I.H., there exists $\beta'$ satisfying \eqref{lem CSO1 eq4} and 
$\pi_\obs(\beta') = \pi_\obs(\beta)$. 
Since $\pi_\obs (\sigma) = \varepsilon$, 
we have 
\[\pi_\obs(\beta') = \pi_\obs(\beta')\pi_\obs(\sigma) = \pi_\obs(\beta \sigma) = \pi_\obs(\alpha).\]

By setting $\alpha' = \beta'$, we have $\pi_\obs(\alpha') =  \pi_\obs(\alpha)$.
Since  $\sigma \in \Sigma_1 \setminus \Sigma_{\GGG_J}$, $\delta_{\GGG_J}^*(x_J,\pi_{\GGG_J}( \beta \sigma)) =\delta_{\GGG_J}^*(x_J,\pi_{\GGG_J}(\beta))$.
Then, by \eqref{defn product eq2}, \eqref{lem CSO1 eq4}, and \eqref{defn eq delta_star}, 
\begin{align*} 
\begin{split}
\Delta^*(x_1',x_J, \alpha')
&= \Delta^*(x_1',x_J, \beta') \\
&= (\delta_1^*(x_1', \alpha_1'),   \delta_{\GGG_J}^*(x_J,\pi_{\GGG_J}(\beta)) )\\ 
&= (\delta_1^*(x_1', \alpha_1'),   \delta_{\GGG_J}^*(x_J,\pi_{\GGG_J}(\alpha)) ).\\ 
\end{split}
\end{align*} 
Thereby, \eqref{lem CSO1 eq2} holds in this case.
\end{itemize}
As we have $\alpha'\in \Sigma_\GGG^*$ that satisfies $\pi_\obs(\alpha') =  \pi_\obs(\alpha)$ and \eqref{lem CSO1 eq2} for all cases, the induction is concluded.
\end{proof}

\subsection{Local Current-state Opacity}\label{section ISO}
This section introduces  
sufficient conditions for the CSO of the global DES, based on the local DESs.
For a set $\SSS$ of global secret states, let $\SSS[i] = \{s[i] \mid s \in \SSS \}$
be the set of its corresponding local secret states in the local DES $G_i$.
\begin{thm} \label{thm CSO2} 
We assume Assumption~\ref{asm Sig_obs_shared}.
Given a secret subset $\SSS \subseteq \XXX$ of global states, if there exists $i \in \{1, \ldots, n\}$ where $G_i$ is CSO w.r.t. 
$\SSS[i]$,
then $\GGG$ is also CSO w.r.t. $\SSS$.
\end{thm}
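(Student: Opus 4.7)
The plan is to use CSO of the local DES $G_i$ to produce a replacement local trace, and then lift it to a global trace via Lemma~\ref{lem CSO1}. First, I would take an arbitrary witness for the global secret: $(x, \alpha) \in \XXX_\ini \times \Sigma_\GGG^*$ with $\Delta^*(x, \alpha) \in \SSS$. By Lemma~\ref{lem product1}, the $i$-th component satisfies $\delta_i^*(x[i], \pi_i(\alpha)) = \Delta^*(x, \alpha)[i] \in \SSS[i]$, since by definition $\SSS[i] = \{s[i] \mid s \in \SSS\}$.

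Next, I would invoke the CSO of $G_i$ w.r.t. $\SSS[i]$ on the local event sequence $\pi_i(\alpha)$ starting from $x[i] \in X_{\ini, i}$. This gives $(x_i', \alpha_i') \in X_{\ini, i} \times \Sigma_i^*$ with $\delta_i^*(x_i', \alpha_i') \in X_i \setminus \SSS[i]$ and $\pi_\obs(\alpha_i') = \pi_\obs \circ \pi_i(\alpha)$. This is exactly the hypothesis \eqref{lem CSO1 eq1} of Lemma~\ref{lem CSO1}, so applying that lemma yields some $\alpha' \in \Sigma_\GGG^*$ with $\pi_\obs(\alpha') = \pi_\obs(\alpha)$ and
\begin{equation*}
\Delta^*(x', \alpha') = (s_1, \ldots, s_{i-1}, \delta_i^*(x_i', \alpha_i'), s_{i+1}, \ldots, s_n),
\end{equation*}
where $x' = (x[1], \ldots, x[i-1], x_i', x[i+1], \ldots, x[n])$ and $s_j = \delta_j^*(x[j], \pi_j(\alpha))$.

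Clearly $x' \in \XXX_\ini$ since each component lies in its corresponding local initial set. It remains to check that $\Delta^*(x', \alpha') \notin \SSS$. This follows by contrapositive from the definition of $\SSS[i]$: if $\Delta^*(x', \alpha') \in \SSS$, then its $i$-th component $\delta_i^*(x_i', \alpha_i')$ would have to belong to $\SSS[i]$, contradicting the choice of $(x_i', \alpha_i')$ from CSO of $G_i$. Hence the pair $(x', \alpha')$ witnesses CSO of $\GGG$ at $(x, \alpha)$, completing the proof.

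The main obstacle is conceptual rather than technical: one must recognize that the projection $\SSS[i]$ of a global secret set plays nicely with CSO in exactly one direction — a global state in $\SSS$ projects into $\SSS[i]$, but a state whose $i$-th coordinate avoids $\SSS[i]$ automatically avoids $\SSS$. All the heavy lifting on synchronizing the local replacement trace $\alpha_i'$ with the unchanged components $s_j$ is already packaged inside Lemma~\ref{lem CSO1}, so no induction on sequence length is needed here.
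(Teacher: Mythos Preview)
Your proposal is correct and follows essentially the same approach as the paper's own proof: project the global witness to $G_i$ via Lemma~\ref{lem product1}, apply local CSO to obtain $(x_i',\alpha_i')$, and then lift back to a global witness via Lemma~\ref{lem CSO1}. Your write-up is in fact slightly more explicit than the paper's in verifying $x' \in \XXX_\ini$ and in spelling out the contrapositive for $\Delta^*(x',\alpha') \notin \SSS$, but the argument is the same.
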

\begin{proof} 
We assume that $G_i$ is CSO w.r.t. $\SSS[i]$ and will show that $\GGG$ is CSO w.r.t. $\SSS$.
Let us consider any 
\begin{equation*} 
(x, \alpha) \in \XXX_\ini \times \Sigma_\GGG^* \text{ such that } \Delta^*(x, \alpha) = s \in \SSS.
\end{equation*}
By Lemma~\ref{lem product1},  
$
 \delta_i^*(x[i], \pi_i(\alpha)) = s[i] \in \SSS[i].
$
Since $G_i$ is CSO w.r.t. $\SSS[i]$, by Definition~\ref{defn CSO}, there exists $(x_i',\alpha_i') \in X_i \times \Sigma_i^*$ such that 
\[\pi_\obs(\alpha_i') = \pi_\obs \circ \pi_i(\alpha) \text{ and }
 \delta_i^*(x_i', \alpha_i') \in X_i \setminus \SSS[i].
\]
By Lemma~\ref{lem CSO1}, there exists $\alpha'$ with $\pi_\obs(\alpha) = \pi_\obs(\alpha')$ and
\begin{align*} 
\begin{split}
\Delta^*&(x[1],\ldots,x[i-1],x_i', x[i+1], \ldots, x[n], \alpha')\\
&= (s[1], \ldots, s[i-1], \delta_i^*(x_i', \alpha_i'), s[i+1], \ldots, s[n])\\
&\in \XXX \setminus \SSS.
\end{split}
\end{align*}
Thereby, $\GGG$ is CSO w.r.t. $\SSS$ and the theorem holds.
\end{proof}

\begin{figure}[t]
  \centering
    \includegraphics[width=0.8\linewidth]{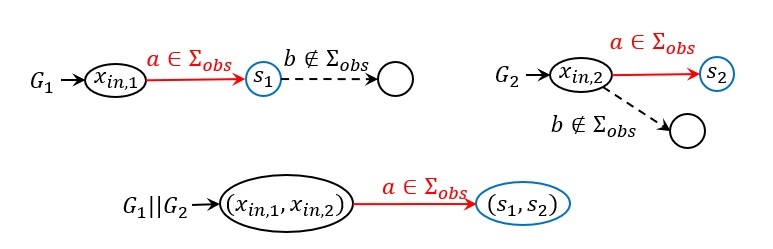}
  \caption{Two local DESs and the accessible part of their parallel composition. 
  The DES $G_1$ is CSO w.r.t $\{s_1\}$, but $G_1\parallel G_2$ is not CSO w.r.t. $\{(s_1, s_2)\}$.}
  \label{fig: remark CSO2} 
\end{figure}

\begin{remark}\label{remark CSO Asm1}
Assumption~\ref{asm Sig_obs_shared} is a necessary condition for Theorem \ref{thm CSO2}.
In the example DES in Fig.~\ref{fig: remark CSO2}, in which the shared event $b$ is not observable,
 $G_1$ is CSO w.r.t. $\{s_1\}$  
but $G_1 \parallel G_2$ is not CSO w.r.t. $\{s_1, s_2\}$.
In this example, the shared event $b$ is blocked by the parallel composition. 
Such an event blocking is generally difficult to detect without constructing any part of $\GGG$, which we aim to avoid. 
\end{remark}

\begin{figure}[t]
  \centering
    \includegraphics[width=0.6\linewidth]{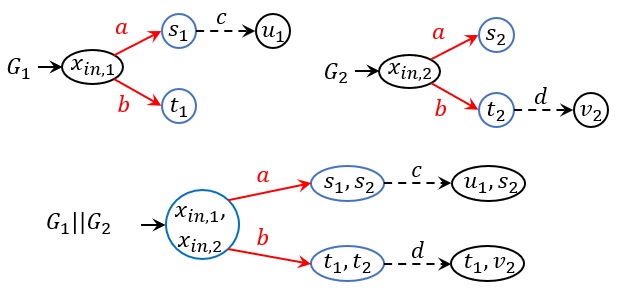}
  \caption{Two local DESs and the accessible part of their parallel composition. 
	$\Sigma_\obs = \{a,b\}$.  
 The DESs $G_1$ and $G_2$ are not CSO w.r.t. $\{s_1, t_1\}$ and $\{s_2, t_2\}$, respectively, but their composition $G_1 \parallel G_2$ is CSO w.r.t. $\{(s_1, s_2), (s_1, t_2), (t_1, s_2), (t_1, t_2)\}$.}
  \label{fig: remark CSO4} 
\end{figure}

\begin{figure}[t]
  \centering
    \includegraphics[width=0.7\linewidth]{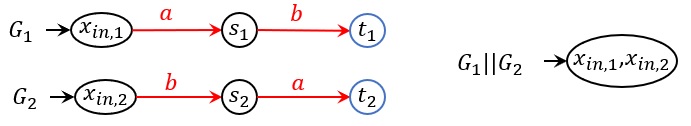}
  \caption{Two local DESs and the accessible part of their parallel composition. All events are observable. The DESs  $G_1$ and $G_2$ are not CSO w.r.t. $\{t_1\}$ and $\{t_2\}$, respectively, but  $G_1 \parallel G_2$ is CSO w.r.t. $\{(t_1, t_2)\}$.}
  \label{fig: remark CSO5}
\end{figure}

\begin{remark}\label{remark CSO3}
The inverse of the implication in Theorem~$\ref{thm CSO2}$ does not hold.
In other words,
the global DES $\GGG$ being  CSO w.r.t. $\SSS$ does not imply the existence of a local DES $G_i$ that is CSO w.r.t. $\SSS[i]= \{s[i] \mid s \in \SSS \}$. 
We provide two counter examples in Fig.~\ref{fig: remark CSO4} and Fig.~\ref{fig: remark CSO5}. 
In both examples, each local DES $G_i$, $i \in \{1,2\}$, is not CSO w.r.t. $\SSS[i]$, but the global DES $\GGG$ is CSO w.r.t. $\SSS$. 
From both examples, we can see that the inverse of the implication in Theorem~$\ref{thm CSO2}$ does not hold event if $\SSS = \SSS[i] \times \ldots \times \SSS[n]$. 
Notice that, in Fig.~\ref{fig: remark CSO4}, the global DES becomes CSO thanks to unobservable local events $c$ and $d$.
In Fig.~\ref{fig: remark CSO5}, the event sequence reaching the secret state $(t_1, t_2)$ is blocked by the parallel composition. 
\end{remark}

Theorem~\ref{thm CSO2} provides a sufficient condition for the opacity of the global DES.
However, if there is no local DES $G_i$ that is CSO w.r.t. $\SSS[i]$,
 $\GGG$ can still be is CSO w.r.t. $\SSS$. 
Using Lemma~\ref{lem S1S1}, 
we propose another sufficient  condition for the CSO of
$\GGG$ by considering subsets of secret states.

\begin{thm} \label{thm CSO3} 
We assume Assumption~\ref{asm Sig_obs_shared}.
Consider a  set of secret global states $\SSS = \SSS_1 \cup \ldots \cup \SSS_m \subseteq \XXX$. 
Let $\SSS[i] = \{s[i] \mid s \in \SSS\}$ and $\SSS_j[i] = \{s[i] \mid s \in \SSS_j\}$ for all $j \in \{1,\ldots,m\}$.
Then, $\GGG$ is CSO w.r.t. $\SSS$ if 
for all $ {j \in \{1,\ldots, m\}}$,
there exists $ {i \in \{1,\ldots,n\}}$ such that the following condition holds.
\begin{align}\label{thm CSO3 eq1} 
\begin{split}
\forall (x_i, \alpha_i) \in X_{i,\ini} \times \Sigma_i^*, \delta_i^*&(x_i, \alpha_i) \in \SSS_j[i],\\
\exists (x_i', \alpha_i') \in X_{i,\ini} \times \Sigma_i^*, &\delta_i^*(x_i', \alpha_i') \in X_i \setminus \SSS[i]\\
&\text{ and }\pi_\obs(\alpha_i') = \pi_\obs(\alpha_i).
\end{split}
\end{align}
\end{thm}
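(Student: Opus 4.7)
The plan is to reduce Theorem~\ref{thm CSO3} to (essentially) repeated application of the argument used in Theorem~\ref{thm CSO2}, one subset $\SSS_j$ at a time, and then glue the results together with Lemma~\ref{lem S1S1}. Concretely, I would first invoke Lemma~\ref{lem S1S1} to observe that it suffices to show the following: for every $j \in \{1,\ldots,m\}$ and every $(x,\alpha) \in \XXX_\ini \times \Sigma_\GGG^*$ with $\Delta^*(x,\alpha) \in \SSS_j$, there exists $\alpha' \in \Sigma_\GGG^*$ and $x' \in \XXX_\ini$ such that $\Delta^*(x',\alpha') \in \XXX \setminus \SSS$ and $\pi_\obs(\alpha) = \pi_\obs(\alpha')$.

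To produce such an $\alpha'$, fix $j$ and let $i = i(j)$ be the index supplied by hypothesis~\eqref{thm CSO3 eq1}. Given $(x,\alpha)$ with $\Delta^*(x,\alpha) = s \in \SSS_j$, Lemma~\ref{lem product1} yields $\delta_i^*(x[i], \pi_i(\alpha)) = s[i] \in \SSS_j[i]$ with $x[i] \in X_{\ini,i}$. Setting $\alpha_i := \pi_i(\alpha)$ and applying the local condition~\eqref{thm CSO3 eq1} gives $(x_i', \alpha_i') \in X_{\ini,i} \times \Sigma_i^*$ with $\delta_i^*(x_i', \alpha_i') \in X_i \setminus \SSS[i]$ and $\pi_\obs(\alpha_i') = \pi_\obs(\alpha_i) = \pi_\obs \circ \pi_i(\alpha)$. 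This puts us exactly in the setting required by Lemma~\ref{lem CSO1}, which then furnishes an $\alpha' \in \Sigma_\GGG^*$ with $\pi_\obs(\alpha') = \pi_\obs(\alpha)$ whose execution from $x' := (x[1],\ldots,x[i-1],x_i',x[i+1],\ldots,x[n]) \in \XXX_\ini$ terminates in a global state whose $i$-th component is $\delta_i^*(x_i',\alpha_i')$.

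The key observation — and the reason the hypothesis is phrased with $X_i \setminus \SSS[i]$ rather than the weaker $X_i \setminus \SSS_j[i]$ — is that $\delta_i^*(x_i',\alpha_i') \notin \SSS[i]$ forces the whole global state $\Delta^*(x',\alpha')$ to lie in $\XXX \setminus \SSS$: if it lay in $\SSS$, its $i$-th projection would by definition belong to $\SSS[i]$, a contradiction. This upgrade from ``outside $\SSS_j$'' to ``outside $\SSS$'' is exactly what is needed to feed the conclusion back into Lemma~\ref{lem S1S1}, completing the argument once we range over $j$.

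I do not anticipate a real obstacle: the structure mirrors Theorem~\ref{thm CSO2} almost line-for-line, with the added bookkeeping that the witnessing index $i$ may vary with $j$. The only subtlety worth double-checking is the projection identity $\pi_\obs(\pi_i(\alpha)) = \pi_\obs(\alpha_i)$, which follows from the commutativity $\pi_i \circ \pi_\obs = \pi_\obs \circ \pi_i$ noted in Section~\ref{section problem}, and the fact that applying Lemma~\ref{lem CSO1} requires $\Delta^*(x,\alpha)!$, which we have by assumption since $\Delta^*(x,\alpha) \in \SSS_j$.
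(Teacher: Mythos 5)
Your proposal is correct and follows essentially the same route as the paper's own proof: apply Lemma~\ref{lem product1} to project the secret-reaching run onto the chosen local DES $G_i$, use condition~\eqref{thm CSO3 eq1} to obtain a local witness ending outside $\SSS[i]$, lift it back with Lemma~\ref{lem CSO1}, and conclude via Lemma~\ref{lem S1S1}. Your explicit remark on why the hypothesis must exclude all of $\SSS[i]$ (not merely $\SSS_j[i]$) correctly identifies the step that makes the lifted global state land in $\XXX \setminus \SSS$.
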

\begin{proof} 
Let us consider a secret subset  $\SSS_{j}$ and 
let $G_{i}$ be the local DES that satisfy \eqref{thm CSO3 eq1}.
Consider any
\begin{equation} \label{thm CSO3 eq2} 
(x, \alpha) \in \XXX_\ini \times \Sigma_\GGG^* \text{ such that } \Delta^*(x, \alpha) = s \in \SSS_j.
\end{equation}
By Lemma~\ref{lem product1},  
$
 \delta_i^*(x[i], \pi_i(\alpha)) = s[i] \in \SSS_j[i].
$ 
By \eqref{thm CSO3 eq1}, there exists $(x_i',\alpha_i') \in X_{i,\ini} \times \Sigma_i^*$ such that 
\[\pi_\obs(\alpha_i') = \pi_\obs \circ \pi_i(\alpha) \text{ and }
 \delta_i^*(x_i', \alpha_i') \in X_i \setminus \SSS[i].
\] 
By Lemma~\ref{lem CSO1}, there exists $\alpha'$ with $\pi_\obs(\alpha) = \pi_\obs(\alpha')$ and
\begin{align}   
\Delta^*&(x[1],\ldots,x[i-1], \delta_i^*(x_i', \alpha_i')   , x[i+1], \ldots, x[n], \alpha')\nonumber\\ 
&\in \XXX \setminus \SSS.\label{thm CSO3 eq3} 
\end{align}
By Lemma~\ref{lem S1S1}, \eqref{thm CSO3 eq2}  and \eqref{thm CSO3 eq3},
the global DES
$\GGG$ is CSO w.r.t. $\SSS = \bigcup_{j \in \{1,\ldots,m\}} S_j$ and the theorem holds.
\end{proof}

Theorem~\ref{thm CSO3} also provides a sufficient condition. Its inverse of the implication does not hold,
as shown in the counter example in Fig. \ref{fig: remark CSO5}.
However, we show in Section~\ref{section algo} that we can use this theorem to verify the global DES in some cases.


\subsection{Local Initial-state Opacity}
We show  that our results for CSO also hold for ISO.

\begin{thm}\label{thm ISO1} 
We assume Assumption \ref{asm Sig_obs_shared}.
Given a secret subset $\SSS \subseteq \XXX_\ini$ of global initial states, if there exists $i \in \{1,\ldots,n\}$ such that $G_i$ is ISO w.r.t. $\SSS[i] = \{s[i] \mid s \in \SSS \}$, then $\GGG$ is also ISO w.r.t. $\SSS$.
\end{thm}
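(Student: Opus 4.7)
The plan is to mirror the argument used for Theorem~\ref{thm CSO2}, replacing ``reaching'' a secret state by ``originating from'' one. Fix an $i$ for which $G_i$ is ISO with respect to $\SSS[i]$, and pick an arbitrary $(x, \alpha) \in \SSS \times \Sigma_\GGG^*$ with $\Delta^*(x, \alpha)!$. Lemma~\ref{lem product1} then gives $\delta_i^*(x[i], \pi_i(\alpha))!$, and since $x \in \SSS$ clearly $x[i] \in \SSS[i]$. Applying Definition~\ref{defn ISO} to $G_i$ (with the initial-state reading used in Lemma~\ref{lem S1S2}) produces a witness $(x_i', \alpha_i') \in (X_{\ini,i} \setminus \SSS[i]) \times \Sigma_i^*$ satisfying $\delta_i^*(x_i', \alpha_i')!$ and $\pi_\obs(\alpha_i') = \pi_\obs \circ \pi_i(\alpha)$.

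The next step is to lift this local witness to a global one via Lemma~\ref{lem CSO1}. That lemma, applied with the pair $(x_i', \alpha_i')$, supplies a global sequence $\alpha' \in \Sigma_\GGG^*$ with $\pi_\obs(\alpha') = \pi_\obs(\alpha)$ such that $\Delta^*(x'', \alpha')!$, where
\[
x'' = (x[1], \ldots, x[i-1], x_i', x[i+1], \ldots, x[n]).
\]
To close the argument I need to verify that $x'' \in \XXX_\ini \setminus \SSS$. For membership in $\XXX_\ini$, note that $x \in \SSS \subseteq \XXX_\ini$ gives $x[j] \in X_{\ini,j}$ for every $j \neq i$, and $x_i' \in X_{\ini,i}$ by construction, so $x'' \in X_{\ini,1} \times \cdots \times X_{\ini,n} = \XXX_\ini$. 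For $x'' \notin \SSS$, observe that if $x''$ were in $\SSS$ then by the definition $\SSS[i] = \{s[i] \mid s \in \SSS\}$ we would have $x''[i] = x_i' \in \SSS[i]$, contradicting the choice of $x_i'$.

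The pair $(x'', \alpha')$ therefore witnesses ISO of $\GGG$ for the arbitrary initial pair $(x, \alpha)$, and Definition~\ref{defn ISO} yields the theorem. The structure is essentially identical to the proof of Theorem~\ref{thm CSO2}; the only real work is the membership check $x'' \notin \SSS$, which I anticipate to be the main (and only mildly subtle) obstacle, since it relies on the contrapositive reading of the componentwise projection $\SSS[i]$. Nothing in the conclusion of Lemma~\ref{lem CSO1} forces the reached global state to lie outside $\SSS$, but that is fine: ISO only constrains the initial state, not the reached one, so the proof is in fact slightly lighter than its CSO counterpart.
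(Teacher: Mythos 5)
Your proposal is correct and follows essentially the same route as the paper's proof: project to $G_i$ via Lemma~\ref{lem product1}, invoke ISO of $G_i$ to get a local witness $(x_i',\alpha_i')$, lift it with Lemma~\ref{lem CSO1}, and observe that the resulting global initial state avoids $\SSS$ because its $i$-th component avoids $\SSS[i]$. Your explicit check that $x''\in\XXX_\ini$ (reading the ISO witness as drawn from $X_{\ini,i}\setminus\SSS[i]$, as Lemma~\ref{lem S1S2} and the hypothesis of Lemma~\ref{lem CSO1} require) is in fact slightly more careful than the paper's own wording, which cites only $x'\in\XXX\setminus\SSS$.
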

\begin{proof} 
Suppose that $G_i$ is ISO w.r.t. $\SSS[i]$.
Consider any 
\begin{equation}\label{thm ISO1 eq1} 
(x, \alpha) \in \SSS \times \Sigma_\GGG^* \text{ such that } \Delta^*(x, \alpha)!.
\end{equation} 
By Lemma \ref{lem product1}, $\delta^*(x[i], \pi_i(\alpha))!$. 
Since $x[i] \in \SSS[i]$ and $G_i$ is ISO w.r.t. $\SSS[i]$, there exists $(x_i', \alpha_i') \in (X_i \setminus \SSS[i]) \times \Sigma_i^*$ with
\begin{equation*} 
\delta^*(x_i', \alpha_i')! \text{ and } \pi_\obs(\alpha_i')= \pi_\obs\circ \pi_i(\alpha).
\end{equation*} 
Let $x' = x[1],\ldots,x[i-1],x_i',x[i+1],\ldots,x[n]$.
By Lemma \ref{lem CSO1}, there exists $\alpha'$ such that
\begin{equation}\label{thm ISO1 eq2} 
\Delta^*(x', \alpha')! \text{ and } \pi_\obs(\alpha) = \pi_\obs(\alpha').
\end{equation} 
Notice that $x' \in \XXX \setminus \SSS$ because $x_i' \in X_i \setminus \SSS[i]$.
Therefore, the lemma holds by \eqref{thm ISO1 eq1}, \eqref{thm ISO1 eq2}, and Definition~\ref{defn ISO}. 
\end{proof}
\begin{figure}[t]
  \centering
    \includegraphics[width=0.7\linewidth]{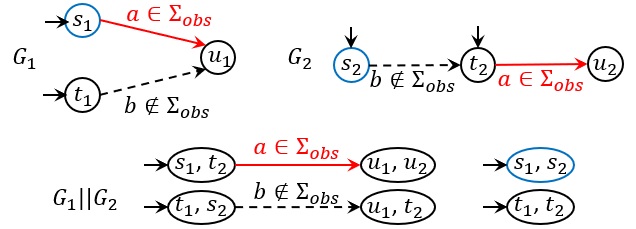}
  \caption{Two local DESs and the accessible part of their parallel composition. Initial states of $G_1$ (\emph{resp.} $G_2$) are $s_1$ and $t_1$ (\emph{resp.} $s_2$ and $t_2$).
   The DES $G_2$ is ISO w.r.t $\{s_2\}$, but $G_1\parallel G_2$ is not ISO w.r.t. $\{(s_1, s_2)\}$.}
  \label{fig: remark ISO3}
\end{figure}


\begin{remark}\label{remark ISO Asm1}
Assumption~\ref{asm Sig_obs_shared} is a necessary condition for Theorem \ref{thm ISO1}.
In the example in Fig.~\ref{fig: remark ISO3}, $G_2$ is ISO w.r.t $\{s_2\}$, but $G_1\parallel G_2$ is not ISO w.r.t. $\{(s_1, s_2)\}$. 
\end{remark}

\begin{remark}\label{remark ISOnotcomplete}
The inverse of the implication in Theorem~\ref{thm ISO1} also does not hold.
The global DES $\GGG$ being  ISO w.r.t. $\SSS$ does not imply the existence of a local DES $G_i$ that is
 ISO w.r.t. $\SSS[i]$.
 Fig.~\ref{fig: remark ISO6} provides a counter example. Both local DESs $G_1$ and $G_2$ are not ISO w.r.t $\{s_1, t_1\}$ and $\{s_2, t_2\}$, respectively, but $G_1\parallel G_2$ is ISO w.r.t. $\{(s_1, s_2), (t_1, t_2)\}$. 
\end{remark}

\begin{figure}[t]
  \centering
    \includegraphics[width=0.6\linewidth]{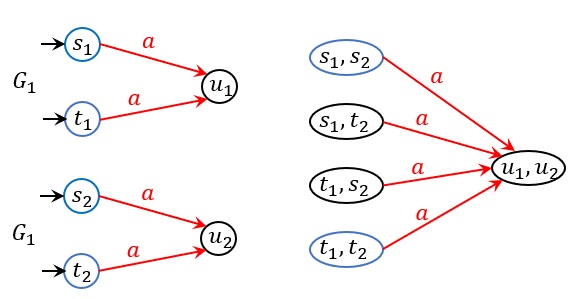}
  \caption{Two local DESs and the accessible part of their parallel composition. Initial states of $G_1$ (\emph{resp.} $G_2$) are $s_1$ and $t_1$ (\emph{resp.} $s_2$ and $t_2$).
  The event $a$ is observable.
   The DESs $G_1$ and $G_2$ are both not ISO w.r.t $\{s_1, t_1\}$ and $\{s_2, t_2\}$, respectively, but $G_1\parallel G_2$ is ISO w.r.t. $\{(s_1, s_2), (t_1, t_2)\}$.}
  \label{fig: remark ISO6}
\end{figure}
 
In the same way as in Theorem~\ref{thm CSO3}, we propose another sufficient condition in  for the ISO of
$\GGG$ by considering
its  secret subsets $\SSS_1, \ldots, \SSS_m \in \SSS$ where $\SSS = \SSS_1 \cup \ldots \cup \SSS_m$. 

\begin{thm} \label{thm ISO3} 
We assume Assumption~\ref{asm Sig_obs_shared}.
Consider a  set of secret global initial states $\SSS = \SSS_1 \cup \ldots \cup \SSS_m \subseteq \XXX_\ini$. 
Let $\SSS[i] = \{s[i] \mid s \in \SSS\}$ and $\SSS_j[i] = \{s[i] \mid s \in \SSS_j\}$ for all $j \in \{1,\ldots,m\}$.
Then, $\GGG$ is ISO w.r.t. $\SSS$ if
 for all $j \in \{1,\ldots, m\}$,
there exist $ i \in \{1,\ldots,n\}$ satisfying the following condition.
\begin{align}\label{thm ISO3 eq1} 
\begin{split}
\forall (x_i, \alpha_i) \in \SSS_j[i] \times &\Sigma_i^*, \delta_i(x_i,  \alpha_i)!,\\
\exists (x_i', \alpha_i') \in (X_i \setminus &\SSS[i]) \times \Sigma_i^*, \delta_i(x_i', \alpha_i')!\\
&\text{ and }\pi_\obs(\alpha_i') = \pi_\obs(\alpha_i).
\end{split}
\end{align}

\end{thm}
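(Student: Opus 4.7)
The plan is to mirror the proof of Theorem~\ref{thm CSO3}, only swapping the ``reaching'' direction for the ``starting from'' direction: Definition~\ref{defn ISO} replaces Definition~\ref{defn CSO}, and the decomposition Lemma~\ref{lem S1S2} replaces Lemma~\ref{lem S1S1}. The lifting step, which converts a local non-secret witness into a global non-secret witness preserving observations, is again supplied by Lemma~\ref{lem CSO1}, whose proof never used that $\alpha$ reached a secret state, only that the global run existed and that $x_i'$ was an initial local state.

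Concretely, I fix an arbitrary $j \in \{1,\ldots,m\}$ and let $i$ be the index provided by \eqref{thm ISO3 eq1} for this $j$. Taking any $(x,\alpha) \in \SSS_j \times \Sigma_\GGG^*$ with $\Delta^*(x,\alpha)!$, Lemma~\ref{lem product1} gives $\delta_i^*(x[i],\pi_i(\alpha))!$, and $x[i] \in \SSS_j[i]$ by the definition of the projection. Applying \eqref{thm ISO3 eq1} to $(x[i],\pi_i(\alpha))$ yields a pair $(x_i',\alpha_i')$ with $x_i' \in X_{\ini,i} \setminus \SSS[i]$ (reading the condition so that $x_i'$ is initial, as required by Lemma~\ref{lem CSO1}), $\delta_i^*(x_i',\alpha_i')!$, and $\pi_\obs(\alpha_i') = \pi_\obs(\pi_i(\alpha)) = \pi_\obs \circ \pi_i(\alpha)$. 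Lemma~\ref{lem CSO1} then delivers $\alpha' \in \Sigma_\GGG^*$ such that $\pi_\obs(\alpha') = \pi_\obs(\alpha)$ and $\Delta^*(x',\alpha')!$, where $x' = (x[1],\ldots,x[i-1],x_i',x[i+1],\ldots,x[n])$. To close the argument, I would verify that $x' \in \XXX_\ini \setminus \SSS$: every coordinate of $x'$ is initial (the untouched ones from $x \in \SSS \subseteq \XXX_\ini$, and the $i$-th from $x_i' \in X_{\ini,i}$), while $x' \notin \SSS$ because any $s \in \SSS$ would force $s[i] \in \SSS[i]$, contradicting $x'[i] = x_i' \notin \SSS[i]$. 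Thus $(x',\alpha')$ is the witness required by Lemma~\ref{lem S1S2} for the subset $\SSS_j$; since $j$ was arbitrary and $\SSS = \bigcup_j \SSS_j$, Lemma~\ref{lem S1S2} concludes that $\GGG$ is ISO w.r.t. $\SSS$.

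The only real subtlety, and in my view the main obstacle, is the same bookkeeping that appears in Theorem~\ref{thm CSO3}: one must confirm that the substituted global state $x'$ lies in $\XXX_\ini \setminus \SSS$, not merely in $\XXX \setminus \SSS$. This relies on interpreting the hypothesis \eqref{thm ISO3 eq1} so that $x_i' \in X_{\ini,i}$ (forced by the use of Lemma~\ref{lem CSO1}, whose hypothesis demands an initial local state) together with the contrapositive projection argument $x'[i] \notin \SSS[i] \Rightarrow x' \notin \SSS$. Once this point is pinned down, no machinery beyond Lemmas~\ref{lem product1}, \ref{lem CSO1}, and \ref{lem S1S2} is needed, and the proof parallels that of Theorem~\ref{thm CSO3} line for line.
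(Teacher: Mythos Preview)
Your proposal is correct and follows essentially the same route as the paper's own proof: fix $j$, take $(x,\alpha)\in\SSS_j\times\Sigma_\GGG^*$, project via Lemma~\ref{lem product1}, invoke \eqref{thm ISO3 eq1} to obtain the local witness $(x_i',\alpha_i')$, lift it to a global run with Lemma~\ref{lem CSO1}, and conclude via Lemma~\ref{lem S1S2}. If anything, you are more careful than the paper in spelling out that $x_i'$ must be initial (needed for Lemma~\ref{lem CSO1}) and that $x'\in\XXX_\ini\setminus\SSS$ rather than merely $\XXX\setminus\SSS$ (needed for Lemma~\ref{lem S1S2}).
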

\begin{proof} 
Consider any secret subset  $\SSS_{j}$ and 
let $G_{i}$ be the local DES that satisfy \eqref{thm ISO3 eq1}.
Consider any
\begin{equation} \label{thm ISO3 eq2} 
(x, \alpha) \in \SSS_j \times \Sigma_\GGG^* \text{ such that } \Delta^*(x, \alpha)!
\end{equation}
By Lemma~\ref{lem product1},  
$
 \delta_i^*(x[i], \pi_i(\alpha))!$ and $x[i] \in \SSS_j[i].
$ 
By \eqref{thm ISO3 eq1}, there exists $(x_i',\alpha_i') \in (X_i \setminus \SSS[i]) \times \Sigma_i^*$ such that 
\[\pi_\obs(\alpha_i') = \pi_\obs \circ \pi_i(\alpha) \text{ and }
 \delta_i^*(x_i', \alpha_i')!
\] 
Let $x' = x[1],\ldots,x[i-1],x_i',x[i+1],\ldots,x[n]$.
By Lemma \ref{lem CSO1}, there exists $\alpha'$ such that
\begin{equation}\label{thm ISO3 eq3} 
\Delta^*(x', \alpha')! \text{ and } \pi_\obs(\alpha) = \pi_\obs(\alpha').
\end{equation} 
Notice that $x' \in \XXX \setminus \SSS$ because $x_i' \in X_i \setminus \SSS[i]$.  
By Lemma~\ref{lem S1S2}, \eqref{thm ISO3 eq2}  and \eqref{thm ISO3 eq3}, 
$\GGG$ is ISO w.r.t. $\SSS = \bigcup_{j \in \{1,\ldots,m\}} S_j$. 
\end{proof}

The inverse of the implication in Theorem~\ref{thm ISO1} also does not hold, as 
it can be shown using the counter example in Fig.~\ref{fig: remark ISO6}. 

\section{Opacity Verification of Global System} \label{section algo}
In section \ref{section local}, we presented the sufficient conditions of the opacity (CSO and ISO) of the global system $\GGG$, by only considering the local DESs $G_i$, $i \in \{1,\ldots,n\}$.
Hence, we can use Theorems~\ref{thm CSO2} and \ref{thm ISO1} is to verify each local DES using existing opacity verification algorithms (e.g. \cite{tong2018current, dubreil2010supervisory, saboori2011opacity, yin2015uniform}). 
Then, if there exists a local DES $G_i$ that is opaque w.r.t. $\SSS[i]$, the global DES $\GGG$ is also opaque w.r.t. $\SSS$ thanks to Theorems~\ref{thm CSO2} and \ref{thm ISO1}. 
By using this technique, we only need to construct the intruder's observer automata \cite{cassandras2009} for each local DES $G_i$, not the global DES $\GGG$.
As a result, we can reduce the size of the intruder's observer automata from 
$\mathcal{O}(2^{|X_1|\times \ldots \times |X_n|})$ (for $\GGG$) to
 $\mathcal{O}(2^{|X_1|} + \ldots + 2^{|X_n|})$. 

For example, let us consider the global DES $\mathit{Agent1}\parallel \mathit{Agent2}\parallel \mathit{Resource}$ of the DESs in Fig.~\ref{fig: example}. 
Suppose that $(\mathit{1wait}, \mathit{2use}, \mathit{2})$ is the only secret state.
By considering the observer automaton in Fig.~\ref{fig: observer} (a), 
we know that $\mathit{Agent1}$ is CSO (\emph{resp.} ISO) w.r.t. $\{\mathit{1wait}\}$.
By Theorem~\ref{thm CSO2}, the global DES is also CSO (\emph{resp.} ISO)
w.r.t. $\{(\mathit{1wait}, \mathit{2use}, \mathit{2})\}$.

\begin{figure}[t]
  \centering
    \includegraphics[width=0.7\linewidth]{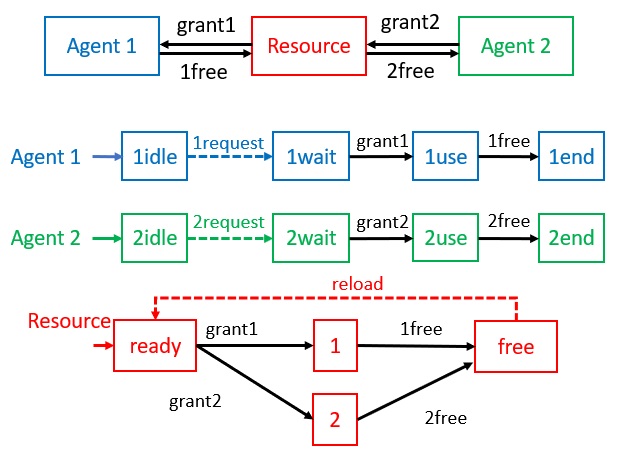}
  \caption{Two agents sharing one resource. The events ``1request'', ``2request'', and ``reload'' are not observable by the intruder.}
  \label{fig: example}
\end{figure}

 As discussed in Remarks \ref{remark CSO3} and \ref{remark ISOnotcomplete}, Theorems~\ref{thm CSO2} and \ref{thm ISO1} provide only sufficient conditions for the opacity of the global DES $\GGG$. 
Let us consider the DESs in Fig.~\ref{fig: example} and $\SSS = \{ (\mathit{1wait},\mathit{2use}, \mathit{2}),  (\mathit{1use},\mathit{2wait}, \mathit{1})
(\mathit{1end}, \mathit{2end}, \mathit{free}) \}$.
For this case, we cannot verify the global DES by verifying local DESs w.r.t. theirs corresponding local secret sets, e.g., $\mathit{Agent1}$ is not opaque w.r.t. $\{\mathit{1wait}, \mathit{1use}, \mathit{1end}\}$.
However, by Theorems~\ref{thm CSO3} and \ref{thm ISO3}, we can try to verify the opacity of $\GGG$ by verifying the opacity of each $G_i$ w.r.t. $\{s\}$, for all $s \in \SSS$.
Let $\SSS_1 = \{ (\mathit{1wait},\mathit{2use}, \mathit{2}) \}$,
$\SSS_2=\{(\mathit{1use},\mathit{2wait}, \mathit{1})\}$,
and  $\SSS_3 = \{ (\mathit{1end}, \mathit{2end}, \mathit{free}) \}$.
Let $G_1$, $G_2$, and $G_3$ be  $\mathit{Agent1}$, $\mathit{Agent2}$, and $\mathit{Resource}$, respectively.
Using the observer automata in Fig.~\ref{fig: observer}, we have the following properties.
\begin{enumerate}
\item $\SSS_1[1] = \{ \mathit{1wait}\}$. For the event sequence $\mathit{1request}$ with $\delta_1(\mathit{1idle}, \mathit{1request}) = \mathit{1wait} \in \SSS_1[1]$, we have $\delta_1(\mathit{1idle}, \varepsilon) = \mathit{1idle} \notin \SSS[1] = \{\mathit{1wait}, \mathit{1use}, \mathit{1end}\}$ and $\pi_\obs(1request) = \pi_\obs(\varepsilon) = \varepsilon$.
\item $\SSS_2[2] = \{ \mathit{2wait}\}$. For the event sequence $\mathit{2request}$ with $\delta_2(\mathit{2idle}, \mathit{2request}) = \mathit{2wait} \in \SSS_2[2]$, we have $\delta_2(\mathit{2idle}, \varepsilon) = \mathit{2idle} \notin \SSS[2] = \{\mathit{2wait}, \mathit{2use}, \mathit{2end}\}$ and $\pi_\obs(2request) = \pi_\obs(\varepsilon) = \varepsilon$.
\item $\SSS_3[3] = \{ \mathit{free}\}$. For all $\alpha \in \Sigma_3^*$
such that $\delta_3(\mathit{ready}, \alpha) = \mathit{free} \in \SSS_3[3]$, 
we have $\alpha' = \alpha\,\mathit{reload}$ where 
$\delta_3(\mathit{ready}, \alpha') = \mathit{ready} \notin \SSS[3]=\{1,2,\mathit{free}\}$ and
$\pi_\obs(\alpha) = \pi_\obs(\alpha')$. 
\end{enumerate}
By Theorem~\ref{thm CSO3}, the global DES is CSO w.r.t $\SSS$.
We can use Theorem~\ref{thm ISO3} to verify that the  global DES is also ISO w.r.t $\SSS$ in the same way.
Thus, for this example, we can verify the global DES using the observer automata of the local DESs.

\begin{figure}[t]
  \centering
    \includegraphics[width=0.8\linewidth]{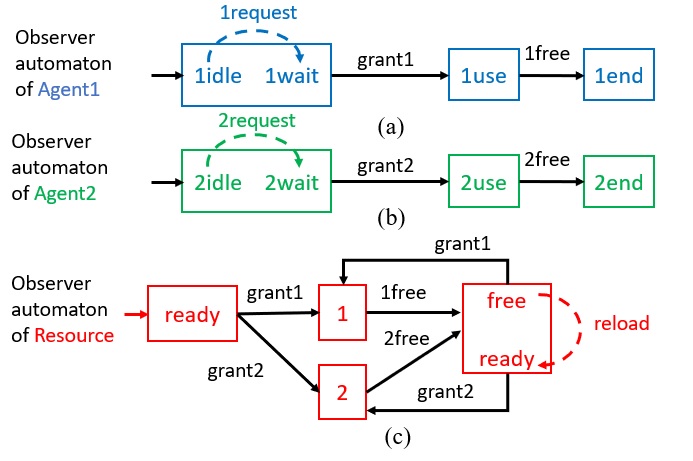}
  \caption{Intruder's CSO observer automata for local DESs in Fig.~\ref{fig: example}. 
  \label{fig: observer}}
\end{figure}

\section{Conclusions} \label{section conclusion}

We studied current-state opacity (CSO) and initial-state opacity (ISO) verification of a distributed DES. 
The distributed DES, which we call the global DES, is the parallel composition of $n$ local DESs. 
By assuming that the intruder observes the events shared between local DESs,
we proposed sufficient conditions for the opacity (CSO and ISO) of the global DES, by considering only the opacity of local DESs. Using these conditions, we introduced efficient methodologies
to verify the global DES's opacity 
using observer automata of the local DESs.
For future work, we will study the verification of other system opacity concepts and the opacity enforcement of distributed DESs. 


\addtolength{\textheight}{-12cm}   



%

%
%
%
%
%

%
%
%
%
%
%
%

\end{document}